\begin{document}

\title{Quasilocal mass and surface Hamiltonian in spacetime}

\author{Mu-Tao Wang}

\address{Department of Mathematics, Columbia University,\\
New York, NY 10027, USA\\
$^*$E-mail: mtwang@math.columbia.edu}

\begin{abstract}
We discuss the concepts of energy and mass in relativity. On a finitely extended spatial region, they lead to the notion of quasilocal energy/mass for the boundary 2-surface in spacetime. A new definition was found in \cite{wy1} that satisfies the positivity, rigidity, and asymptotics properties. The definition makes use of the surface Hamiltonian term which arises from Hamilton-Jacobi analysis of the gravitation action. The reference surface Hamiltonian is associated with an isometric embedding of the 2-surface into the Minkowski space. We discuss this new definition of mass as well as the reference surface Hamiltonian. Most of the discussion is based on joint work with PoNing Chen and Shing-Tung Yau.
\end{abstract}

\keywords{Quasilocal mass, surface Hamiltonian.}

\bodymatter

\section{Energy of matter fields and conservation}

Relativity is a unified theory of space and time. The spacetime of special relativity is the Minkowski space $\mathbb{R}^{3,1}=\mathbb{R}\times \mathbb{R}^3$ with Lorentz metric of signature $(-1, 1, 1, 1)$. We normalize  the speed of light to be $1$. The light cone consists of four vectors $(t, x, y, z)$ with $=-t^2+x^2+y^2+z^2=0$. As nothing travels faster than light, a material particle or an observer moves in future timelike direction.

To each matter field, an energy-momentum tensor $T$ is attached. $T$ is derived from the Lagrangian of the field and is described by first derivatives of the field. In particular, it is a symmetric $(0,2)$ tensor $T_{\mu\nu}$ which satisfies the conservation law

\begin{equation}\label{conserv}\nabla^\mu T_{\mu\nu}=0.\end{equation}

Without gravitation, the energy of a physical system $\Omega$ is obtained by integrating $T$ on $\Omega$ with respect to an observer. To be more precise, given a spacelike bounded region $\Omega$, the energy intercepted by $\Omega$ as seen by the observer $t^\nu$ is the flux integral
\[\int_\Omega T_{\mu\nu} t^\mu u^\nu \] where $u^\nu$ is the future timelike unit normal of $\Omega$. The dominant energy condition guarantees $T_{\mu\nu} t^\mu u^\nu\geq 0$.

Suppose $t^\mu$ is a constant future-directed timelike unit vector in $\mathbb{R}^{3,1}$. By conservation law \eqref{conserv}, $T_{\mu\nu}t^\mu$ is divergence free and thus is dual to a closed 3-form in $\mathbb{R}^{3,1}$, which in turn is $d\omega$ for a 2-form $\omega$.
 Therefore, $\int_{\Omega} T_{\mu\nu}t^\mu u^\nu=\int_{\partial \Omega}\omega $ is a linear expression in $t^\mu$.
 Minimizing among all such observers $t^\mu$ gives the quasilocal mass which depends only on the boundary 2-surface $\Sigma=\partial\Omega$. Moreover, $\int_{\Omega} T_{\mu\nu}u^\nu$ defines a quasilocal energy-momentum 4-vector. This is the prototype of quasilocal mass and quasilocal energy-momentum.

\section{Energy in General relativity}

In general relativity, spacetime is a 4-dimensional manifold with a Lorentz metric $g$, the gravitational field. Local causal structure of spacetime remains the same, and each tangent space is isometric to the Minkowski space. Gravitational force is represented by the spacetime curvature of $g$. The relation between the gravitation field and matter fields is exactly described by the Einstein equation

    \begin{equation}\label{einstein}Ric-\frac{1}{2} R g=8\pi T\end{equation}
     where $Ric$ is the Ricci curvature, and $R$ is the scalar curvature of $g$, respectively. $T$ represents the energy-momentum tensor of all matter fields. This is the Euler-Lagrange equation of the Hilbert-Einstein action.

Concerning energy, one seeks for an energy momentum tensor for gravitation. However, it turns out first derivatives of $g$ are all coordinate dependent, and thus there is no density for gravitational energy. This is Einstein's equivalence principle.
One can still integrate $T$ on the right hand side of \eqref{einstein} but this gives only the energy contribution from matters. Indeed, there exists vacuum spacetime, i.e. $T=0$, with nonzero energy such as Schwarzschild's or Kerr's solution of Einstein's equation. This is gravitational energy by the sheer presence of spacetime curvature. Even without energy density, one can still ask the question: what is the energy in a system $\Omega$, counting contribution from gravitation and all matter fields?

 In special relativity, the energy integral of $T$ on $\Omega$ depends only on the boundary data by energy conservation. One expects energy conservation in general relativity as well, and thus this information should be encoded in the geometry the two-dimensional boundary surface $\Sigma=\partial\Omega$.

 This leads to the well-known problem of quasilocal energy/mass in general relativity. The first one in Penrose's 1982 list \cite{pe} of major unsolved problems in classical
general relativity is ``Find a suitable quasilocal definition of energy-momentum in general relativity".

\section{Total energy and mass}

 Einstein's field equation is derived from variation of the Einstein-Hilbert action on a spacetime domain $M$:
    \[\frac{1}{16\pi}\int_M R+\frac{1}{8\pi} \int_{\partial M} K+\int_M L(g, \Phi)\] where $K$ is the trace
    of the second fundamental form of $\partial M$ and $\Phi$ represents all the matter fields.

 Formally applying Hamilton-Jacobi analysis to this action, we obtain ${T^*}_{\mu\nu}$, the so called Einstein pseudo tensor,
which is expressed in terms of first derivatives of $g$ and satisfies $\nabla^\mu {T^*}_{\mu\nu}=0$.

Here is Hermann Weyl's (1921) comment on $T^*_{\mu\nu}$ (the English translation is quoted from \cite{ch}):

``Nevertheless it seems to be physically meaningless to introduce the $T^*_{\mu\nu}$ as energy
components of the gravitational field; for, these quantities are neither a tensor
nor are they symmetric. In fact by choosing an appropriate coordinate system all
the $T^*_{\mu\nu}$ can be made to vanish at any given point; for this purpose one only needs
to choose a geodesic (normal) coordinate system. And on the other hand one gets
$T^*_{\mu\nu}\not= 0$ in a 'Euclidean' completely gravitationless world when using a curved coordinate
system, but where no gravitational energy exists. Although the differential relations
($ \nabla^\mu {T^*}_{\mu\nu}= 0$) are without a physical meaning, nevertheless by integrating
them over an {\it isolated system} one gets invariant conserved quantities".

An isolated system is modeled on an unbounded and asymptotically flat spacetime where gravitation is weak at infinity.
There are two notions of total mass associated with such a system, one at spatial infinity and the other at null infinity.

Arnowitt-Deser-Misner \cite{adm} applied Hamilton-Jacobi analysis of the Einstein-Hilbert action to such a system that is asymptotically flat at spatial infinity, and obtained a total energy-momentum that is
conserved.

Suppose $({\Omega},g_{ij}, p_{ij})$ is asymptotically flat, i.e. there is a compact subset $K$ of $\Omega$ such that ${\Omega} \backslash K$ is a finite union of ball complements in $\mathbb{R}^3$, and on each component there is asymptotically flat coordinate system such that $g_{ij}-\delta_{ij}\sim 0 $ and $p_{ij}\sim 0$ with appropriate decay rate on their derivatives.
 The total energy is \[E=\lim_{r\rightarrow \infty}\frac{1}{16\pi }
\int_{S_r}(\partial_j g_{ij}-\partial_i g_{jj})dv^i,\] where $S_r$ is the coordinate sphere of coordinate radius $r$.
 The total momentum is \[P_k=\lim_{r\rightarrow \infty} \frac{1}{16\pi }\int_{S_r}
2(p_{ik}-\delta_{ik} p_{jj})dv^i.\]
 $(E, P_1, P_2, P_3)$ is the so called ADM energy momentum 4-vector.

The positive mass theorem of Schoen and Yau \cite{sy1, sy2}(see also Witten \cite{wi}) states that the total mass of such an isolated system is always positive. Suppose the dominant energy condition holds along an asymptotically flat ${\Omega}$, then $(E, P_1, P_2, P_3)$ is a future-directed non-spacelike vector, i.e.
  \[E\geq 0,-E^2+P_1^2+P_2^2+P_3^2\leq0.\]
 In particular, the ADM mass $\sqrt{E^2-P_1^2-P_2^2-P_3^2}$ is non-negative and $=0$ if and only if the spacetime is flat along $\bar{\Omega}$.

 There is also the Bondi-Sachs energy-momentum \cite{bvm} for a asymptotically null hypersurface which measures energy after radiation. Positive energy theorem at null infinity also holds \cite{sy4, hp}, and thus the physical system cannot radiate away more energy than it has initially.

\section{Quasilocal energy/mass and expectations}

We formulate the question of quasilocal energy and mass:

\begin{question} Suppose $\Omega$ is a bounded spacelike region, what is the total energy intercepted by $\Omega$ as seen by an observer? What is the total mass contained in $\Omega$? The answer to these questions should depend only on $\Sigma=\partial\Omega$ by conservation law.

\end{question}
 In comparison to the ADM or Bondi total mass for an isolated system where gravitation is weak at boundary (infinity), the notion of quasilocal mass corresponds to a non-isolated system where gravitation could be strong. What properties qualify for a valid definition? Here are three that we think are most natural:
\begin{arabiclist}
  \item Asymptotics: The limit should recover the ADM mass in the asymptotically flat case and the Bondi mass in the asymptotically null case. It should also recover the energy-momentum tensor in non-vacuum and the Bel-Robinson tensor in vacuum for small sphere limits.

 \item Positivity: The mass should be positive under local energy condition for a large class of
surfaces.

 \item Rigidity: The quasilocal mass should vanish for surfaces in $\mathbb{R}^{3,1}$.
\end{arabiclist}
\section{Hamilton-Jacobi approach}
 There have been various approaches in attempt to define quasilocal mass (see \cite{sz} and the reference therein). We focus on the canonical
Hamilton-Jacobi analysis approach which seems most relevant to Einstein's equation.
Quasilocal Hamilton-Jacobi analysis of Einstein-Hilbert action has been studied by Brown-York \cite{by1, by2}, Hawking-Horowitz \cite{hh}, and Kijowski\cite{ki}.
 Applying the analysis to the time history of a spatially bounded region in spacetime yields the Hamiltonian
 which is a 2-surface integral at terminal time that depends on a pair of vector fields $(t^\mu, u^\mu)$ along $\Sigma$.
$t^\mu$ is a future timelike unit vector field and $u^\mu$ a future timelike unit normal vector. $u^\mu$ should be considered as the future unit normal of a spacelike hypersurface $\Omega$ bounded by $\Sigma$.
We decompose \[t^\mu=N u^\mu+N^\mu.\] The  surface Hamiltonian in \cite{hh} is \begin{equation}\label{surface_ham} \mathcal{H}(t^\mu, u^\mu)=-\frac{1}{8\pi}\int_\Sigma N k-N^\mu v^\nu (p_{\mu\nu}-p_\lambda^\lambda g_{\mu\nu})\end{equation} where
 $k$ is the mean curvature of $\Sigma$ as boundary of $\Omega$,
 $p_{\mu\nu}$ is the second fundamental form of $\Omega$ in spacetime, and
 $v^\nu$ is the outward unit spacelike normal along $\Sigma$  that is orthogonal to $u^\nu$.

 The energy is defined to be the difference between the physical surface Hamiltonian and the reference surface Hamiltonian. Reference surface Hamiltonian in principle should come from data associated with isometric embedding of the time history of the boundary into a reference spacetime. But this is in general an over-determined problem.

 Isometric embedding of $\Sigma$ into $\mathbb{R}^3$ has been used to define Brown-York mass and Liu-Yau mass (see also Kijowski \cite{ki}, Booth-Mann \cite{bm}, Epp \cite{ep}, etc.) with $u^\mu=t^\mu$ (thus $N=1$ and $N^\mu=0$) to be specified. There is a unique isometric embedding into $\mathbb{R}^3$ for any metric with positive Gauss curvature, see Nirenberg \cite{ni} and Pogorelov \cite{po}.

 The Brown-York mass is defined to be $\frac{1}{8\pi}(\int_\Sigma k_0-\int_\Sigma k)$ where $k$ is the mean curvature of $\Sigma$ with respect to a spacelike region $\Omega$, and $k_0$ is the mean curvature of the image of the isometric embedding of $\Sigma$ into $\mathbb{R}^3$. The Liu-Yau mass is $\frac{1}{8\pi}(\int_\Sigma k_0-\int_\Sigma |H|)$ where $H$ is the mean curvature vector of $\Sigma$ in spacetime. Note that the Liu-Yau mass is gauge independent.

 The Brown-York mass and the Liu-Yau mass have the important positivity property by the work of Shi-Tam \cite{st} and Liu-Yau \cite{ly, ly2}, respectively. However, there exist surfaces in $\mathbb{R}^{3,1}$ with strictly positive Brown-York mass and Liu-Yau mass \cite{ost}.

\section{New definition of quasilocal energy}

 For an isometric embedding $X:\Sigma\rightarrow \mathbb{R}^{3,1}$ and $T_0 \in\mathbb{R}^{3,1}$ a constant future timelike unit vector, we define the quasilocal energy to be

\[{ E(\Sigma, X, T_0)=\mathcal{H}(t^\mu, u^\mu)-\mathcal{H}(t_0^\mu, u_0^\mu)}\] where $t_0^\mu=T_0$.

We shall call $\Sigma\subset M$ the physical surface and the image of $X$ in $\mathbb{R}^{3,1}$ the reference surface.

 In the following, we discuss our prescription for  $u_0^\mu$, $t^\mu$, and $u^\mu$ in \cite{wy1, wy2}. Consider the reference surface $\Sigma\subset \mathbb{R}^{3,1}$ and $t_0^\mu$ a constant future timelike unit vector. We take { $u_0^\mu$} to be the unit { normal} future timelike unit vector field in the direction of the normal part of $t_0^\mu$, i.e. $t_0^\mu=N u_0^\nu+N^\mu$ where $N^\mu$ is tangent to $\Sigma$.

 This defines the reference Hamiltonian $\mathcal{H}(t_0^\mu, u_0^\mu)$ which is shown to be equal to
\[-\frac{1}{8\pi}\int_{\hat{\Sigma}}\hat{k}\] where $\hat{\Sigma}$ is the projection of $\Sigma$ onto the orthogonal complement of $t_0^\mu=T_0$.

 We proved a unique isometric embedding theorem \cite{wy2} into $\mathbb{R}^{3,1}$ with convex shadows, i.e. $\hat{\Sigma}$ is a convex surface in the orthogonal complement $\mathbb{R}^3$. To find the corresponding gauge $(t^\mu, u^\mu)$ on the physical surface,
 we assume the { mean curvature vector} of $\Sigma$ in spacetime is spacelike.  For a reference isometric embedding $X:\Sigma\rightarrow \mathbb{R}^{3,1}$ and a $t_0^\mu$, we claim there exists a unique future timelike unit vector { $t^\mu$} along the physical surface $\Sigma\subset M$ such that

 {\it ``The expansion of $\Sigma$ along $t_0^\mu$ in $\mathbb{R}^{3,1}$ is the same as the expansion of $\Sigma$ along $t^\mu$ in $M$".}

 Now define { $u^\mu$} by $t^\mu=Nu^\mu+N^\mu$ along the physical surface $\Sigma\subset M$ for the same $N$ and $N^\mu$. Thus $t^\mu$ and $t^\mu_0$ have the same lapse functions and shift vectors along the physical surface $\Sigma\subset M$ and the reference surface $\Sigma\subset \mathbb{R}^{3,1}$, respectively. Use this $(t^\mu, u^\mu)$ on $\Sigma\subset M$ to compute the physical Hamiltonian $\mathcal{H}(t^\mu, u^\mu)$ and this defines our
{ quasilocal energy $E(\Sigma, X, T_0)$}.

\section{The expression and properties}

 Let $\Sigma$ be a spacelike 2-surface in spacetime which bounds a spacelike hypersurface $\Omega$ with a future unit timelike normal vector field $u^\mu$. Denote by $v^\mu$ the unit spacelike outward normal of $\Sigma=\partial \Omega$ with respect to $\Omega$. The { mean curvature vector} of $\Sigma$ is
\[H=-k v^\mu+p u^\mu\] where $k$ is the mean curvature of $\Sigma$ in $\Omega$ with respect to $v^\mu$ and $p$ is the trace of the restriction of $p_{ij}$ to $\Sigma$. The definition of $H$ is indeed independent of $\Omega$ and the choice of $u^\mu$ and $v^\mu$. Let $J$ be the reflection of $H$ along the future inward null direction in the normal bundle, i.e. $J=k u^\mu-pv^\mu$. $H$ is inward spacelike if and only if $J$ is future timelike.

It turns out $E(\Sigma, X, T_0)$ can be expressed in term of the mean curvature vector field $H$ of $\Sigma$ in $M$ and $\tau=- \langle X, T_0\rangle_{\mathbb{R}^{3,1}}$.

Suppose $H$ is spacelike, we can use the frame $H$ and $J$ to define a connection one-form for the normal bundle of $\Sigma$ by $\langle\nabla_{(\cdot)}^M \frac{J}{|H|}, \frac{H}{|H|}\rangle$.
We recall the following fact that {\it ``the mean curvature vector of the isometric embedding $X:\Sigma\rightarrow \mathbb{R}^{3,1}$ is $H_0=\Delta X$".} Here $\Delta$ is the Laplace operator for functions on $\Sigma$ with respect to the induced metric. For a function defined on $\Sigma$ such as $\tau$, we also use $\nabla\tau$ to denote its gradient vector that is tangent to $\Sigma$.
The quasilocal energy $E(\Sigma, X , T_0)$ with respect to $(X, T_0)$ is
\[\begin{split}&\frac{1}{8\pi}\int_{\hat{\Sigma}}\hat{k}-\frac{1}{8\pi}\int_\Sigma [\sqrt{|H|^2(1+|\nabla \tau|^2)+(\Delta\tau)^2}\\
&-\Delta \tau \sinh^{-1}\frac{\Delta\tau}{\sqrt{1+|\nabla \tau|^2}|H|}
-\langle\nabla^{M}_{\nabla\tau} \frac{J}{|H|}, \frac{H}{|H|}\rangle ] \end{split}\]where
 \[\begin{split}&\int_{\hat{\Sigma}}\hat{k}=\int_\Sigma [\sqrt{|H_0|^2(1+|\nabla \tau|^2)+(\Delta\tau)^2}\\
&-\Delta \tau \sinh^{-1}\frac{\Delta\tau}{\sqrt{1+|\nabla \tau|^2}|H_0|}
-\langle\nabla^{\mathbb{R}^{3,1}}_{\nabla\tau} \frac{J_0}{|H_0|}, \frac{H_0}{|H_0|}\rangle ].\end{split}\]

 Quasilocal mass is defined to be the { infimum of quasilocal energy} $E(\Sigma, X, T_0)$ among all ``admissible observers" $(X, T_0)$ (see \cite{wy1} for the definition):
\[m(\Sigma)=\inf E(\Sigma, X, T_0).\]

In \cite{wy1, wy2, wy3, cwy}, we prove:

\begin{arabiclist}
\item { Positivity}: $m(\Sigma)\geq 0 $ under dominant energy condition on spacetime and convexity assumptions on $\Sigma$.

\item { Rigidity}: $m(\Sigma)=0$ if $\Sigma$ is in $\mathbb{R}^{3,1}$.

\item Quasilocal mass approaches the { ADM mass} and { Bondi mass} at spatial and null infinity, respectively.
\end{arabiclist}
This is the only known definition of quasilocal mass that satisfies all these properties.

In fact, the quasilocal energy $E(S_r, X_r, T_0)$ gets { linearized} and acquires the Lorentzian symmetry at infinity.
 \[\lim_{r\rightarrow \infty} E(S_r, X_r, T_0)=T_0^\mu P_{\mu}\] where $P_\mu=(P_0, P_1, P_2, P_3)$ is the ADM / Bondi-Sachs
{ energy-momentum 4-vector}, at spatial/null infinity.

 In general, suppose $\Sigma_r$ is a family of surface in spacetime and a family of isometric embedding $X_r$ of $\Sigma_r$ in $\mathbb{R}^{3,1}$ is given. As long as $\frac{|H_0|}{|H|}\rightarrow 1$ as $r\rightarrow \infty$, the limit of the quasilocal energy $E(\Sigma_r, X_r, T_0)$ is the same as the limit of
\[\frac{1}{8\pi}  \int_{\Sigma_{r}} -\langle T_0,  \frac{{J}_0}{|H_0|}\rangle (|H_0|-|H|) - \langle\nabla^{\mathbb{R}^{3,1}}_{\nabla\tau} \frac{{J}_0}{|H_0|}, \frac{H_0}{|H_0|}\rangle+ \langle\nabla^{N}_{\nabla\tau} \frac{{J}}{|H|}, \frac{H}{|H|}\rangle.\]
As $\tau=-\langle X, T_0\rangle$, the expression is already { linear} in $T_0$.

\section{Surface Hamiltonian and Minkowski inequality}

In this section, we discuss the surface Hamiltonian in Minkowski space and the connection to an inequality proposed by Gibbons.

We recall the following identity in \cite{wy2, wa1} regarding the surface Hamiltonian \eqref{surface_ham}:
\begin{proposition}\label{ham_min}
 For a closed spacelike 2-surface $\Sigma$ in the Minkowski space which bounds a spacelike hypersurface and a constant future timelike unit vector field $T_0$, there exists a unique orthogonal normal gauge $\{\breve{e}_3, \breve{e}_4\}$ along $\Sigma$ such that $\breve{e}_3$ is a outward spacelike unit normal and $\breve{e}_4$ is a future timelike unit normal and they satisfy
\begin{equation}\label{ref_gauge}-\frac{1}{8\pi}\int_\Sigma \langle J, T_0\rangle_{\mathbb{R}^{3,1}}+\langle \nabla^{\mathbb{R}^{3,1}}_{\breve{e}_3} \breve{e}_4, T_0^\top \rangle_{\mathbb{R}^{3,1}} =\frac{1}{8\pi}\int_{\hat{\Sigma}} \hat{k}\end{equation} where $\hat{\Sigma}$ is the projection of $\Sigma$ onto the orthogonal complement of $T_0$ and $\hat{k}$ is the mean curvature of $\hat{\Sigma}$.
\end{proposition}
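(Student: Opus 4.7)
The plan breaks into three steps: fix the gauge, rewrite the left-hand side of \eqref{ref_gauge} cleanly, and compare it to $\int_{\hat{\Sigma}}\hat{k}$ via the projection $\pi:\mathbb{R}^{3,1}\to T_0^\perp$.

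First, because $T_0$ is a future timelike unit constant vector and $\Sigma$ bounds a spacelike hypersurface, the component of $T_0$ normal to $\Sigma$ is future-directed timelike. Decomposing $T_0=N\breve{e}_4+T_0^\top$ with $T_0^\top$ tangent to $\Sigma$ uniquely determines the future timelike unit normal $\breve{e}_4$, and $\breve{e}_3$ is the unique outward spacelike unit normal orthogonal to $\breve{e}_4$, giving existence and uniqueness of the gauge. In this gauge $\langle\breve{e}_3,T_0\rangle=0$, so $\langle J,T_0\rangle=k\langle\breve{e}_4,T_0\rangle-p\langle\breve{e}_3,T_0\rangle=-Nk$, and the LHS of \eqref{ref_gauge} simplifies to $\tfrac{1}{8\pi}\int_\Sigma[Nk-\langle\nabla^{\mathbb{R}^{3,1}}_{\breve{e}_3}\breve{e}_4,T_0^\top\rangle]\,d\sigma$.

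Second, introduce $\pi(x)=x+\langle x,T_0\rangle T_0$; under the bounding/convex-shadow hypothesis (cf.\ the isometric embedding theorem of \cite{wy2}) $\pi|_\Sigma$ is a diffeomorphism onto $\hat{\Sigma}$. For an orthonormal tangent frame $\{e_A\}$ on $\Sigma$, $d\pi(e_A)=e_A+\langle e_A,T_0^\top\rangle T_0$ yields the induced metric $\hat{g}_{AB}=\delta_{AB}+(T_0^\top)_A(T_0^\top)_B$, so $d\hat{\sigma}=N\,d\sigma$ and $\hat{g}^{AB}=\delta^{AB}-(T_0^\top)^A(T_0^\top)^B/N^2$. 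Because $\breve{e}_3\in T_0^\perp$ and $\langle\breve{e}_3,d\pi(e_A)\rangle=0$, $\breve{e}_3$ serves as the outward Euclidean unit normal of $\hat{\Sigma}$. Using $DT_0=0$ and $\langle T_0,\breve{e}_3\rangle=0$, the Euclidean second fundamental form of $\hat{\Sigma}$ with respect to $\breve{e}_3$ reduces to $\hat{h}_{AB}=\langle D_{e_A}e_B,\breve{e}_3\rangle=:h^{(3)}_{AB}$. Contracting with $\hat{g}^{AB}$ and using $k=-\operatorname{tr} h^{(3)}$ (the sign convention dictated by $H=-k\breve{e}_3+p\breve{e}_4$) gives $\hat{k}=k+h^{(3)}(T_0^\top,T_0^\top)/N^2$.

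The final step (and the main obstacle) is recognizing $h^{(3)}(T_0^\top,T_0^\top)/N$ as the normal-bundle connection term. Substituting $T_0^\top=T_0-N\breve{e}_4$ and using $DT_0=0$ together with $\langle\breve{e}_4,\breve{e}_3\rangle=0$ gives $h^{(3)}(T_0^\top,T_0^\top)=-N\langle D_{T_0^\top}\breve{e}_4,\breve{e}_3\rangle$. By the symmetry of the second fundamental form of the spacelike hypersurface orthogonal to $\breve{e}_4$ (constructed locally in Minkowski by a standard extension of $\breve{e}_4$ as a gradient), $\langle D_{T_0^\top}\breve{e}_4,\breve{e}_3\rangle=\langle\nabla^{\mathbb{R}^{3,1}}_{\breve{e}_3}\breve{e}_4,T_0^\top\rangle$, yielding
\[
\int_{\hat{\Sigma}}\hat{k}\,d\hat{\sigma}=\int_\Sigma\bigl(Nk-\langle\nabla^{\mathbb{R}^{3,1}}_{\breve{e}_3}\breve{e}_4,T_0^\top\rangle\bigr)\,d\sigma,
\]
which divided by $8\pi$ is precisely \eqref{ref_gauge}. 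Apart from this symmetry identification, which crucially uses the gauge identity $T_0=N\breve{e}_4+T_0^\top$, the rest of the argument is pullback bookkeeping.
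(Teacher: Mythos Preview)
Your argument is correct. The paper does not supply an in-text proof of this proposition; it simply cites Proposition~3.1 of \cite{wy2} (and \cite{gi}). Your three-step computation---fix the gauge via $T_0=N\breve{e}_4+T_0^\top$, pull back the geometry of $\hat{\Sigma}$ through the projection $\pi$, and identify the residual term with the normal-connection contribution---is exactly the direct calculation that underlies the cited result, so there is no meaningful difference in approach to report.

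One remark on the only genuinely delicate step. The expression $\langle\nabla^{\mathbb{R}^{3,1}}_{\breve{e}_3}\breve{e}_4,T_0^\top\rangle$ is, on its face, extension-dependent because $\breve{e}_3$ is normal to $\Sigma$. You resolve this by extending $\breve{e}_4$ as the unit normal of a spacelike hypersurface (equivalently, as a unit gradient) so that $\langle D_{T_0^\top}\breve{e}_4,\breve{e}_3\rangle=\langle D_{\breve{e}_3}\breve{e}_4,T_0^\top\rangle$ by symmetry of that hypersurface's second fundamental form. This is the intended reading: in the surface-Hamiltonian formula \eqref{surface_ham} the term $N^\mu v^\nu p_{\mu\nu}$ is $p(T_0^\top,\breve{e}_3)$ for the second fundamental form $p$ of an $\Omega$ with future normal $\breve{e}_4$, and its value equals the intrinsic quantity $\langle D_{T_0^\top}\breve{e}_4,\breve{e}_3\rangle$ regardless of the particular $\Omega$ chosen. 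Your handling of this point is adequate; it would be slightly cleaner to note that the result is therefore independent of the extension, since both sides of your final identity involve only data along $\Sigma$.
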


\begin{proof}
Proposition 3.1 of \cite{wy2} (see also \cite{gi}).
\end{proof}

 In fact, denote by $\tau$ the restriction of time function defined by $T_0$ to $\Sigma$ and by $\nabla\tau$ the gradient vector field of $\tau$ on $\Sigma$ with respect to the induced metric, we have \[T_0=\sqrt{1+|\nabla\tau|^2}\breve{e}_4-\nabla \tau.\] The lapse and shift of $T_0$ are given by $ \sqrt{1+|\nabla\tau|^2}$  and  $ T_0^\top=-\nabla\tau$, respectively.

The classical Minkowski inequality for surfaces in $\mathbb{R}^3$ states that for a closed convex surface $\hat{\Sigma}$ in $\mathbb{R}^3$,
\[\int_{\hat{\Sigma}} \hat{k} \, d\mu \geq \sqrt{16 \pi \, |\hat{\Sigma}|},\]
where $\hat{k}$ is the mean curvature and $|\hat{\Sigma}|$ is the area of $\hat{\Sigma}$.

Applying the Minkowski inequality and recalling that the area of $\hat{\Sigma}$ is always greater than or equal to the area of $\Sigma$, we obtained the following inequality between the surface Hamiltonian and the area.

\begin{equation}\label{Hamiltonian_area}-\frac{1}{8\pi}\int_\Sigma \langle J, T_0\rangle_{\mathbb{R}^{3,1}}+\langle \nabla^{\mathbb{R}^{3,1}}_{\breve{e}_3} \breve{e}_4, T_0^\top \rangle_{\mathbb{R}^{3,1}} \geq \sqrt{\frac{|\Sigma|}{4\pi}}.\end{equation}

In equation (6.16) of \cite{gi}, the author claimed that the following inequality holds and called it the black hole isoperimetric inequality.

\begin{equation}\label{bh}-\frac{1}{8\pi}\int_\Sigma \langle J, T_0\rangle_{\mathbb{R}^{3,1}} \geq \sqrt{\frac{|\Sigma|}{4\pi}}.\end{equation}

However, the derivation in \cite{gi} is not correct and the validity of this inequality remains open, see also section 7.1 of \cite{ma}.

Recently, a sharp Minkowski inequality in the hyperbolic 3-space was proved in \cite{bhw}:
\begin{equation}\label{min_hyper}\int_\Sigma f \, h \, d\mu - 6 \int_\Omega f \, d\text{\rm vol}\geq \sqrt{16 \pi \, |\Sigma|}\end{equation} for any mean convex, star shaped region $\Omega \subset \mathbb{H}^3$ and $\Sigma=\partial\Omega$. Here $f=\cosh r$ where $r$ is the geodesic distance function with respect to a point $o\in \mathbb{H}^3$ and $h$ is the mean curvature of $\Sigma$ with respect to outward unit normal of $\Omega$.

In the following, we show that \eqref{min_hyper} is equivalent to \eqref{bh} when the surface $\Sigma\subset \mathbb{R}^{3,1}$ lies in the hyperbolic space $\mathbb{H}^3\subset \mathbb{R}^{3,1}$.

\begin{theorem} Suppose $\Sigma$ is a closed embedded spacelike 2-surface in the Minkowski space and $T_0\in \mathbb{H}^3\subset \mathbb{R}^{3,1}$ is a future unit timelike constant vector. Suppose $\Sigma$ lies in $\mathbb{H}^3$ and is mean convex and star-shaped with respect to $T_0$, then inequality \eqref{bh} holds, i.e.
\[-\frac{1}{8\pi}\int_\Sigma \langle J, T_0\rangle_{\mathbb{R}^{3,1}} \geq \sqrt{\frac{|\Sigma|}{4\pi}}.\]
\end{theorem}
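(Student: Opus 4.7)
The plan is to reduce the inequality directly to the hyperbolic Minkowski inequality \eqref{min_hyper} by rewriting $-\int_\Sigma \langle J, T_0\rangle_{\mathbb{R}^{3,1}}$ as an intrinsic quantity in $\mathbb{H}^3$.

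The first step is to decompose the normal bundle of $\Sigma \subset \mathbb{R}^{3,1}$ using that $\Sigma \subset \mathbb{H}^3$. I would take $v^\mu = \nu$ to be the outward unit spacelike normal of $\Sigma$ within $\mathbb{H}^3$ bounding the star-shaped region $\Omega \subset \mathbb{H}^3$, and $u^\mu = X$, the position vector, which is a future unit timelike normal to $\mathbb{H}^3$ in $\mathbb{R}^{3,1}$ since $\langle X, X\rangle = -1$. A short computation using $\nabla^{\mathbb{R}^{3,1}}_V X = V$ shows that the (normal-valued) second fundamental form of $\mathbb{H}^3 \subset \mathbb{R}^{3,1}$ is $\mathrm{II}(V,W) = \langle V,W\rangle X$, whose trace over $T\Sigma$ equals $2X$. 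Combining with the mean curvature vector $-h\nu$ of $\Sigma$ inside $\mathbb{H}^3$ (where $h$ denotes the mean curvature with respect to $\nu$), I obtain $H = -h\nu + 2X$ in the notation of the paper, and therefore $J = hX - 2\nu$.

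The second step is the key identification. Set $f := -\langle X, T_0\rangle_{\mathbb{R}^{3,1}}$. For $T_0 \in \mathbb{H}^3$ this coincides with $\cosh r$, where $r$ is the hyperbolic geodesic distance from $T_0$ to $X$, which is precisely the weight function appearing in \eqref{min_hyper}. For $V$ tangent to $\mathbb{H}^3$ we have $V(f) = -\langle V, T_0\rangle$, hence
\[
-\langle J, T_0\rangle = -h\langle X, T_0\rangle + 2\langle \nu, T_0\rangle = hf - 2\nu(f).
\]
Integrating over $\Sigma$ and applying the divergence theorem in $\mathbb{H}^3$ to the second term yields
\[
-\int_\Sigma \langle J, T_0\rangle\,d\mu \;=\; \int_\Sigma h f\, d\mu \;-\; 2\int_\Omega \Delta_{\mathbb{H}^3} f\, d\mathrm{vol} \;=\; \int_\Sigma h f\, d\mu \;-\; 6\int_\Omega f\, d\mathrm{vol},
\]
using the standard identity $\Delta_{\mathbb{H}^3}\cosh r = 3\cosh r$. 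This is exactly the left-hand side of \eqref{min_hyper}.

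Finally, since $\Sigma$ is mean convex and star-shaped with respect to $T_0$ by hypothesis, \eqref{min_hyper} applies and gives $-\int_\Sigma \langle J, T_0\rangle \geq \sqrt{16\pi\,|\Sigma|}$; dividing by $8\pi$ yields \eqref{bh}. The only delicate point is really the setup in the first paragraph: one must correctly interpret the position vector $X$ simultaneously as a point of $\mathbb{H}^3$ and as the timelike unit normal to $\mathbb{H}^3$ in $\mathbb{R}^{3,1}$, and keep track of signs in the formula $H = -kv^\mu + p u^\mu$ so that $p$ indeed contributes $2$ (the trace of the identity shape operator of $\mathbb{H}^3$ restricted to $T\Sigma$). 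Once this is done, the Laplacian computation and the divergence theorem are routine, and the theorem reduces cleanly to \eqref{min_hyper}.
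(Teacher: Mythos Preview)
Your proposal is correct and follows essentially the same route as the paper: decompose $H=-h\nu+2X$ using that the position vector $X$ is the future timelike unit normal of $\mathbb{H}^3$, form $J=hX-2\nu$, identify $f=-\langle X,T_0\rangle=\cosh r$, compute $-\langle J,T_0\rangle=hf-2\,\partial_\nu f$, and convert the boundary term via the divergence theorem and $\Delta_{\mathbb{H}^3}f=3f$ to obtain exactly the left-hand side of \eqref{min_hyper}. The only cosmetic difference is that the paper first applies a Lorentz transformation to set $T_0=(1,0,0,0)$, whereas you work directly with a general $T_0\in\mathbb{H}^3$; both are equivalent.
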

\begin{proof} By Lorentz transformation, we may assume $T_0$ is $(1,0,0,0)$. Now choose $r$ to be the geodesic distance function on $\mathbb{H}^3$ with $(1,0,0,0)$ as the origin.
 We embed $\mathbb{H}^3$ into $\mathbb{R}^{3,1}$ as the upper branch of the hyperbola $\{ (t, x, y, z)\,|\, t>0, -t^2+x^2+y^2+z^2=-1\}$. Suppose the embedding is given by the position four-vector $X=(t, x, y, z)$

 The mean curvature vector of $\Sigma$ in $\mathbb{R}^{3,1}$ is $-h\nu +2e_4$ where $h$ is the mean curvature of $\Sigma$ in $\mathbb{H}^3$, $\nu$ is the outward unit normal of $\Sigma$ in $\mathbb{H}^3$, and $e_4$ is the future unit timelike normal of $\mathbb{H}^3$ in $\mathbb{R}^{3,1}$. Note that $e_4$ is the same as the position vector of the embedding $X:\mathbb{H}^3\rightarrow \mathbb{R}^{3,1}$.

 Consider the normal vector field \[{J}=he_4-2\nu\] obtained by reflecting ${H}$ along the future inward null direction of the normal bundle.  We check that
\[-\int_\Sigma \langle {J},  T_0 \rangle_{\mathbb{R}^{3,1}}\,\,d\mu=\int_\Sigma ( f \,\,h-2 \frac{\partial f}{\partial \nu}) \,\, d\mu=\int_\Sigma f \, h \, d\mu - 6 \int_\Omega f \, d\text{\rm vol}\] because $f=-\langle X, T_0\rangle$ and $\bar{\Delta} f=3 f$ where $\bar{\Delta}$ is the Laplace operator on $\mathbb{H}^3$.
\end{proof}

If we assume the surface has spacelike inward mean curvature vector, the integral $-\int_\Sigma \langle J, T_0\rangle_{\mathbb{R}^{3,1}}\,\, d\mu$ is positive for any $T_0$.
 From this, we can formulate a general question for spacelike surfaces in $\mathbb{R}^{3,1}$.
 \begin{question} Suppose $\Sigma$ is closed spacelike 2-surface that bounds a spacelike hypersurface in $\mathbb{R}^{3,1}$. Under what condition does  inequality \eqref{bh} hold? \end{question}
\section*{Acknowledgements}

 The author would like to thank Simon Brendle, PoNing Chen, Gary Gibbons, and Shing-Tung Yau for helpful discussions.
The author's research was supported by the National Science Foundation under grant DMS-1105483.

\end{document}